\numberwithin{equation}{section}
\newcommand{\bee}{\begin{equation}}
\newcommand{\ene}{\end{equation}}
\newcommand{\al}{\alpha}
\newcommand{\pa}{\partial}
\newcommand{\la}{\lambda}
\newcommand{\ep}{\epsilon}
\newtheorem{theorem}{Theorem}
\begin{document} 

\title{Bifurcations and Competing Coherent Structures in the Cubic-Quintic Ginzburg-Landau Equation I: Plane Wave (CW) Solutions}

\vskip0.4in 

\author{Stefan C. Mancas\\
and\\
S. Roy Choudhury \\ 
Department of Mathematics \\ 
Embry Riddle Aeronautical University\\ 
Daytona Beach, FL 32114 \\ 
mancass@erau.edu\\
choudhur@longwood.cs.ucf.edu}

\date{}

\maketitle

\begin{abstract} 

\vspace{.1in}

Singularity Theory is used to comprehensively investigate the bifurcations of the steady-states of the traveling wave ODEs of the cubic-quintic Ginzburg-Landau equation (CGLE). These correspond to plane waves of the PDE. In addition to the most general situation, we also derive the degeneracy conditions on the eight coefficients of the CGLE under which the equation for the steady states assumes each of the possible quartic (the quartic fold and an unnamed form), cubic (the pitchfork and the winged cusp), and quadratic (four possible cases) normal forms for singularities of codimension up to three. Since the actual governing equations are employed, all results are globally valid, and not just of local applicability. In each case, the recognition problem for the unfolded singularity is treated. The transition varieties, i.e. the hysteresis, isola, and double limit curves are presented for each normal form. For both the most general case, as well as for various combinations of coefficients relevant to the particular cases, the bifurcations curves are mapped out in the various regions of parameter space delimited by these varieties. The multiplicities and interactions of the plane wave solutions are then comprehensively deduced from the bifurcation plots in each regime, and include features such as regimes of hysteresis among co-existing states, domains featuring more than one interval of hysteresis, and isola behavior featuring dynamics unrelated to the primary solution branch in limited ranges of parameter space.

\end{abstract}

\doublespace

\section{Introduction}

The cubic complex Ginzburg-Landau equation (CGLE) is the canonical equation governing the weakly nonlinear behavior of dissipative systems in a wide variety of disciplines \cite{1}. In fluid mechanics, it is also often referred to as the Newell-Whitehead equation after the authors who derived it in the context of B\'enard convection \cite{1,2}.

As such, it is also one of the most widely studied nonlinear equations. Many basic properties of the equation and its solutions are reviewed in \cite{3,4}, together with applications to a vast variety of phenomena including nonlinear waves, second-order phase transitions, superconductivity, superfluidity, Bose-Einstein condensation, liquid crystals and string theory. The numerical studies by Brusch et al. \cite{5,6} which primarily consider periodic traveling wave solutions of the cubic CGLE, together with secondary pitchfork bifurcations and period doubling cascades into disordered turbulent regimes, also give comprehensive summaries of other work on this system. Early numerical studies \cite{7} and theoretical investigations \cite{8} of periodic solutions and secondary bifurcations are also of general interest for our work here.

Certain situations or phenomena, such as where the cubic nonlinear term is close to zero, may require the inclusion of higher-order nonlinearities leading to the so-called cubic-quintic CGLE. This has proved to be a rich system with very diverse solution behaviors. In particular, a relatively early and influential review by van Saarloos and Hohenberg \cite{9}, also recently extended to two coupled cubic CGL equations \cite{10,11}, considered phase-plane counting arguments for traveling wave coherent structures, some analytic and perturbative solutions, limited comparisons to numerics, and so-called `linear marginal stability analysis' to select the phase speed of the traveling waves. Among the multitude of other papers, we shall only refer to two sets of studies which will directly pertain to the work in this paper. The first of this series of papers \cite{12}-\cite{16} using dynamical systems techniques to prove that the cubic-quintic CGLE admits periodic and quasi-periodic traveling wave solutions. The second class of papers \cite{17,18}, primarily involving numerical simulations of the full cubic-quintic CGL PDE in the context of Nonlinear Optics, revealed various branches of plane wave solutions which are referred to as continuous wave (CW) solutions in the Optics literature. More importantly, these latter studies also found various spatially confined coherent structures of the PDE, with envelopes which exhibit complicated temporal dynamics. In \cite{17}, these various structures are categorized as plain pulses (periodic), pulsating solitary waves and so on depending on the temporal behavior of the envelopes. In addition, the phase speed of the coherent structures may be zero, constant, or periodic (since it is an eigenvalue, the phase speed may be in principle also be quasiperiodic or chaotic, although no such cases appear to have been reported). Secondary complete period doubling cascades leading as usual to regimes of chaos are also found. This last feature for numerical solutions of the full cubic-quintic PDE is strongly reminiscent of the period doubling cascades found in \cite{5,6} for period solutions of the traveling wave reduced ODEs for the cubic CGLE.

Motivated by the above, we begin a fresh look at the traveling wave solutions of the cubic-quintic CGLE in this paper. Besides attempting to understand the complex numerical coherent structures in \cite{18}, one other goal is to build a bridge between the dynamical systems approach in \cite{12}-\cite{16} and the numerical one in \cite{17,18}. Given the importance of the cubic-quintic CGLE as a canonical pattern-forming system, this is clearly important in and of itself. However, a word of warning is in order here. Some of the features in \cite{18} may well be inherently spatio-temporal in nature, so that a spatial traveling-wave reduction may not be sufficient to completely capture all aspects. Indeed, there is some evidence along these lines \cite{19}.

In this paper, we begin by using Singularity Theory \cite{20} to comprehensively categorize the plane wave (CW) solutions which were partially considered numerically in \cite{17}. In addition, we shall be able to identify co-existing CW solutions in all parameter regimes together with their stability. The resulting dynamic behaviors will include hysteresis among co-existing branches, as well as the existence of isolated solution branches (isolas) separated from the main solution branch. Subsequent papers will consider periodic traveling waves (traveling periodic wavetrains of the PDE), quasiperiodic solutions and homoclinic solutions (corresponding to pulse solutions of the PDE) and their bifurcations.

The remainder of this paper is organized as follows. Section 2 considers two formulations for the traveling-wave reduced ODEs for the cubic-quintic CGLE, as well as CW solutions. Section 3 quickly recapitulates the standard stability analysis for individual CW solutions. In Section 4, which is the heart of the paper, Singularity Theory is employed to comprehensively categorize all possible co-existing and competing plane wave solutions in general parameter regimes, as well as special cases corresponding to all possible quartic and cubic normal forms for singularities of codimension up to three. Section 5 considers the corresponding bifurcation diagrams as well as the resulting dynamical behaviors.

\section{Traveling Wave Reduced ODEs}

\subsection{Reductions}

We shall consider the cubic-quintic CGLE in the form \cite{9}
\bee \label{2.1}
\pa_tA=\epsilon A+(b_1+ic_1)\pa_x^2A-(b_3-ic_3) |A|^2A-(b_5-ic_5) |A|^4 A 
\ene
noting that any three of the coefficients (no two of which are in the same term) may be set to unity by appropriate scalings of time, space and $A$.

For the most part, we shall employ the polar form used in earlier treatments \cite{5,9} of the traveling wave solutions of \eqref{2.1}. This takes the form of the ansatz
\begin{align}\label{2.2}
A(x,t) &= e^{-i\omega t} \hat A (x-vt)\notag\\
&= e^{-i\omega t} a(z)e^{i\phi (z)}
\end{align}
where
\bee\label{2.3}
z\equiv x-vt
\ene
is the traveling wave variable and $\omega$ and $v$ are the frequency and translation speed (and are eigenvalues). Substitution of \eqref{2.2}/\eqref{2.3} in \eqref{2.1} leads, after some simplification, to the three mode dynamical system
\begin{subequations}\label{2.4}
\begin{align}
a_z &= b\label{2.4a}\\
b_z &= a\psi^2 -\frac{(b_1\epsilon +c_1\omega)a+v(b_1b+c_1\psi a) -(b_1b_3-c_1c_3)a^3-(b_1b_5-c_1c_5)a^5}{b_1^2+c_1^2}\label{2.4b}\\
\psi_z &= -\frac{2\psi b}{a}+\frac{-b_1\omega +c_1\epsilon +v\left(\frac{c_1b}{a}-b_1\psi\right)-(b_1c_3+b_3c_1)a^2-(b_1c_5+b_5c_1)a^4}{b_1^2+c_1^2} \label{2.4c}
\end{align}
\end{subequations}
where $\psi\equiv\phi_z$. Note that we have put the equations into a form closer to that in \cite{5}, rather than that in \cite{9}, so that \eqref{2.4} is a generalization of the traveling wave ODEs in \cite{5} to include the quintic terms.

For future reference, we also include the fourth-order ODE system one would obtain from \eqref{2.1} using the rectangular representation
\begin{align}\label{2.5}
A(x,t)&=e^{-i\omega t}\hat A (x-vt)\notag\\
&=e^{-i\omega t} [\al (z)+i\beta (z)]
\end{align}
with $z$ given by \eqref{2.3}. Using \eqref{2.5} in \eqref{2.1}  yields the system:
\begin{subequations}
\begin{align}
-c_1\delta_z+b_1\gamma_z&=\Gamma_1 \label{2.6a}\\
b_1\delta_z+c_1\gamma_z&=\Gamma_2 \label{2.6b}
\end{align}
\end{subequations}
where $\gamma =\alpha'$, $\delta =\beta'$, $^\prime =\mathrm{d}/\mathrm{d}z$, and $\Gamma_1/\Gamma_2$ are given below. This may be written as a first order system
\begin{align}\label{2.7}
\al'&=\gamma\notag\\
\beta'&=\delta\notag\\
(b_1^2+c_1^2)\gamma'&=b_1\Gamma_1+c_1\Gamma_2\notag\\
(b_1^2+c_1^2)\delta'&=b_1\Gamma_2-c_1\Gamma_1
\end{align}
with
\begin{subequations}
\begin{align}
\Gamma_1&=\omega\beta -v\gamma -\epsilon\alpha +(b_3\al +c_3\beta )(\al^2+\beta^2 )+(b_5\al+c_5\beta )(\al^2+\beta^2)^2\label{2.8a}\\
\intertext{and}
\Gamma_2&=-\omega\al -v\delta -\epsilon\beta+(b_3\beta -c_3\al )(\al^2+\beta^2)+(b_5\beta -c_5\al )(\al^2+\beta^2)^2 .\label{2.8b}
\end{align}
\end{subequations}

\subsection{Fixed Points and Plane (Continuous) Wave Solutions}

From \eqref{2.2}, a fixed point $(a_0,0,\psi_0)$ of \eqref{2.4} corresponds to a plane wave solution
\bee\label{2.9}
A(x,t)=a_0e^{i(\psi_0z-\omega t)+i\theta}
\ene
with $\theta$ an arbitrary constant. 

The fixed points of \eqref{2.4a}, \eqref{2.4b} and \eqref{2.4c} may be obtained by setting $b=0$ (from \eqref{2.4a}) in the right hand sides of the last two equations, solving the last one for $\psi$, and substituting this in the second yielding the quartic equation
\bee\label{2.10}
\al_4x^4+\al_3x^3+\al_2x^2+\al_1x+\al_0=0
\ene
with
\begin{subequations}\label{2.11}
\begin{align}
x &= a^2,\label{2.11a}\\
\alpha_4&=\frac{(b_1c_5+b_5c_1)^2}{b_1^2v^2}\label{2.11b}\\
\alpha_3&=\frac{2(b_1c_3+b_3c_1)(b_1c_5+b_5c_1)}{b_1^2v^2}\label{2.11c}\\
\alpha_2&=\frac{b_3^2c_1^2+2b_1b_3c_1c_3-2b_5c_1^2\epsilon}{b_1^2v^2}+\frac{b_5v^2+b_1(c_3^2+2c_5\omega)+2c_1(b_5\omega-c_5\epsilon)}{b_1v} \label{2.11d}\\
\alpha_1&=\frac{b_3}{b_1}+\frac{2(b_1\omega-c_1\epsilon)(b_1c_3+b_3c_1)}{b_1^2v^2}  \label{2.11e}\\
\alpha_0&=\frac{(c_1\epsilon-b_1\omega)^2}{b_1^2v^2}-\frac{\epsilon}{b_1} \;.\label{2.11f}
\end{align}
\end{subequations}
Thus, with $a_0=\sqrt{x}$ for $x$ any of the four roots of \eqref{2.10}, we have a plane wave solution of the form \eqref{2.9}. 

The fixed points of the system \eqref{2.7} are given by $\gamma =\delta =0$ and $\Gamma_1 =\Gamma_2=0$. They may be obtained by eliminating the $\al$ and $\beta$ terms by solving $\Gamma_1=\Gamma_2=0$ simultaneously yielding:
$$
\al^2+\beta^2 =0
$$
or
$$
\alpha^2 +\beta^2 = \frac{b_5\omega+c_5\epsilon}{b_3c_5-b_5c_3} .
$$
Resubstituting these into the $\Gamma_1=\Gamma_2=0$ yields only the trivial fixed point 
\bee\label{2.12}
\al =\beta =0 .
\ene
Thus, the system \eqref{2.7} has no non-trivial plane wave solutions.

In the next section, we begin the consideration of the stability, co-existence and bifurcations of the plane wave states of \eqref{2.1} (the fixed points of \eqref{2.4a}, \eqref{2.4b} and \eqref{2.4c}).

\singlespace
\section{Stability Analysis for Individual Plane Wave \\Solutions}

\doublespace
In this section, we conduct a stability analysis of individual plane wave solutions using regular phase plane techniques. This was already done for the alternative formulation of the traveling wave ODEs given in \cite{9}. We provide a brief derivation for our system \eqref{2.4a}, \eqref{2.4b} and \eqref{2.4c} for completeness and future use. However, a much more complex question is the issue of categorizing and elucidating the possible existence of, and transitions among, multiple plane wave states which may co-exist for the same parameter values in \eqref{2.1} (corresponding to the same operating conditions of the underlying system). Such behavior is well-documented in systems such as the Continuous Stirred Tank Reactor System \cite{20,24}. For a system such as \eqref{2.1} and the associated ODEs \eqref{2.4a}, \eqref{2.4b} and \eqref{2.4c}, the large number of parameters makes a comprehensive parametric study of co-existing states bewilderingly complex, if not actually impracticable. This more complex issue is addressed in the next section.

For each of the four roots $x_i,i=1, \dots ,4$ of \eqref{2.10} corresponding to a fixed point of \eqref{2.4a}, \eqref{2.4b} and \eqref{2.4c} or a plane wave $\sqrt{x_i} \; e^{i(\psi_iz-wt)+i\theta_i}$, the stability may be determined using regular phase-plane analysis. The characteristic polynomial of the Jacobian matrix of a fixed point $x_i=a_i^2$ of \eqref{2.4a}, \eqref{2.4b} and \eqref{2.4c} may be expressed as
\bee \label{3.1}
\la^3+\delta_1\la^2+\delta_2\la +\delta_3=0
\ene
where
\begin{subequations}
\begin{align}
\delta_1 &=\frac{2b_1v}{b_1^2+c_1^2} \label{3.2a}\\
\delta_2 &=\frac{1}{b_1^2+c_1^2 } \big[ 3a^2(c_1c_3-b_1b_3)+5a^4(c_1c_5-b_1b_5)\notag\\
&+(b_1\epsilon +c_1\omega)+3(b_1^2+c_1^2)\psi^2 +v(v-3c_1\psi)\big] \label{3.2b}\\
\delta_3 &= \frac{4a^2\psi[(b_1c_3+b_3c_1)+2a^2(b_1c_5+b_5c_1)]}{b_1^2+c_1^2}\notag\\
&+\frac{1}{(b_1^2+c_1^2)^2}\Big\{b_1c_1\psi v^2-v\Big[a^2\Big(2b_3(b_1^2+c_1^2)+b_1(b_1b_3-c_1c_3)\Big)\notag\\
&+a^4\Big(4b_5(b_1^2+c_1^2)+b_1(b_1b_5-c_1c_5)\Big)-b_1\Big((b_1\epsilon+c_1\omega)-\psi^2(b_1^2+c_1^2)\Big)\Big]\Big\},\label{3.2c}
\end{align}
\end{subequations}
where the fixed point values $(a_i,\psi_i)=(\sqrt{x_i}, \psi_i)$ are to be substituted in terms of the system parameters (\S 2). Note that $\psi_i$ is obtained by setting $a=a_i=\sqrt{x_i},$ and $ b=0$ in the right side of \eqref{2.4c}.

Using the Routh-Hurwitz conditions, the corresponding fixed point is stable for 
\bee\label{3.3}
\delta_1>0,\quad\delta_3>0,\quad\delta_1\delta_2-\delta_3>0.
\ene
Equation \eqref{3.3} is thus the condition for stability of the plane wave corresponding to $x_i$.

On the contrary, one may have the onset of instability of the plane wave solution occurring in one of two ways. In the first, one root of \eqref{3.1} (or one eigenvalue of the Jacobian) becomes non-hyperbolic by going through zero for 
\bee \label{3.4}
\delta_3=0.
\ene
Equation \eqref{3.4} is thus the condition for the onset of `static' instability of the plane wave. Whether this bifurcation is a pitchfork or transcritical one, and its subcritical or supercritical nature, may be readily determined by deriving an appropriate canonical system in the vicinity of \eqref{3.4} using any of a variety of normal form or perturbation methods \cite{21}-\cite{23}.

One may also have the onset of dynamic instability (`flutter' in the language of Applied Mechanics) when a pair of eigenvalues of the Jacobian become purely imaginary. The consequent Hopf bifurcation at
\bee\label{3.5}
\delta_1\delta_2-\delta_3=0
\ene
leads to the onset of periodic solutions of \eqref{2.4a}, \eqref{2.4b} and \eqref{2.4c} (dynamic instability or `flutter'). These periodic solutions for $a(z)$ and $\psi (z)$, which may be stable or unstable depending on the super- or subcritical nature of the bifurcation, correspond via \eqref{2.2} to solutions
\bee\label{3.6}
A(x,t)=a(z)e^{i\left(\int \psi dz-\omega t\right)}
\ene
of the CGLE \eqref{2.1} which are, in general, quasiperiodic wavetrain solutions. This is because the period of $\psi$ and $\omega$ are typically incommensurate. Eq. \eqref{3.6} is periodic if $\omega =0$. We shall consider these wavetrains, including the derivation of normal forms, more general versions of the Hopf bifurcation, and stability, in a companion paper.

Here, we change gears to address the more difficult question of the possible coexistence of, and transitions among, multiple plane wave states for the same parameter sets.

\section{Co-existing and Competing Plane Waves}

As mentioned earlier, for a multiparameter system like \eqref{2.1}, and the associated ODEs \eqref{2.4a}, \eqref{2.4b} and \eqref{2.4c}, a comprehensive parametric study of co-existing states is forbiddingly complex, if not actually impracticable. Theoretical guidance is needed to determine all the multiplicity features in various parameter domains, as well as the stability of, and mutual transitions among, coexisting plane waves in each domain.

In this section, we use Singularity Theory \cite{20} to comprehensively analyze such multiplicity features for \eqref{2.1}/\eqref{2.4a}, \eqref{2.4b}, \eqref{2.4c}. In particular, we shall derive the existence conditions on the eight coefficients of the CGLE under which the steady state equation \eqref{2.10} assumes either a. all possible quartic normal forms (the quartic fold, and an unnamed form), or b. all distinct cubic normal forms (the pitchfork or the winged cusp) for singularities of codimension up to three. In addition, given that the most degenerate singularities or bifurcations tend to be the primary organizing centers for the dynamics, we also consider the even higher codimension singularities leading to various quadratic normal forms. Clearly, the most degenerate singularities for a particular parameter set would `organize' the dynamics in the sense that local behavior in its vicinity predicts actual quasi-global results. In fact, since we employ the actual governing equations, the ensuing results are not just locally valid, as is often the case, but they have global applicability.

First, denoting \eqref{2.10} as
\bee\label{4.1}
g(x,\al_i)=\al_4x^4+\al_3x^3+\al_2x^2+\al_1x+\al_0=0
\ene
where $g$ denotes the `germ' and the $\al_i$ are given in terms of system parameters by \eqref{2.11}, all points of bifurcation (where the Implicit Function Theorem fails) satisfy
\bee\label{4.2}
g_x=0.
\ene
Given a germ  satisfying \eqref{4.1}/\eqref{4.2}, the general Classification Theorem in \cite{20} provides a comprehensive list of all possible distinct normal forms to which it may be reduced for bifurcations of codimension less than or equal to three.

For our $g$, which is already in polynomial form, it is particularly straightforward to reduce it to each of these normal forms in turn and this is what we shall do next. Following this, we shall consider the general form \eqref{4.1} itself. We start first with the possible distinct quartic normal forms viz, the `Quartic Fold' and an unnamed form, and then proceed systematically to lower order normal forms. In the standard manner, the so-called `Recognition Problem' or identification of each normal form yields certain defining conditions and non-degeneracy conditions and we check these first for each form. Each normal form has a well-known `universal unfolding' or canonical form under any possible perturbation \cite{20}. This is so under certain other non-degeneracy conditions (the conditions for the solution of the so-called recognition problem) which we next satisfy. Once the universal unfolding is established, we next need to consider the various parameter regions (for the parameters in the unfolding) where distinct behaviors for the solutions $x$ occur. The boundaries of these regions are the so-called `transition varieties' across which these behaviors change or are non-persistent. We consider these next. The final step involves detailing in each region delimited by two adjacent `transition variety' curves the bifurcation diagram for $x$, i.e., the possible co-existing steady states of \eqref{2.4a}, \eqref{2.4b} and \eqref{2.4c} (or plane waves of \eqref{2.1}) and their stability.

\subsection{The Quartic Fold}

We perform the steps mentioned above for the first quartic normal form, viz. the Quartic Fold
\bee\label{4.3}
h_1(x,\la )=\overline \epsilon x^4 +\delta\la .
\ene
Clearly, our germ (4.1) has this form for
\begin{align}\label{4.4}
\al_4 &= \overline \epsilon,\notag\\
\al_3 &= \al_2 =\al_1 =0\notag\\
\al_0 &= \delta\la .
\end{align}
For the normal form \eqref{4.3}, the universal unfolding is
\bee\label{4.5}
G_1(x,\la )=\overline \epsilon x^4 +\delta\la +\al x+\beta x^2
\ene
with defining conditions
\bee
g_{xx}=g_{xxx}=0,
\ene
non-degeneracy conditions
$$
\overline \epsilon = \text{sgn} \left( \frac{\partial^4 h_1}{\partial x^4}\right), \quad \delta =\text{sgn} \left( \frac{\partial h_1}{\partial\la}\right)
$$
and provided the condition for the solution of the recognition problem
\bee\label{4.7}
\begin{vmatrix}
g_\la & g_{\la x} &g_{\la xx}\\
G_{1\al} &G_{1\al x} &G_{1\al xx}\\
G_{1\beta} &G_{1\beta x} &G_{1\beta xx} \end{vmatrix} \neq 0
\ene
is satisfied. Given \eqref{4.1} and \eqref{4.4}, the conditions \eqref{4.5} are automatically satisfied, while \eqref{4.7} yields the condition
\bee\label{4.8}
\delta\neq 0.
\ene

The transition varieties across which the $(\la ,x)$ bifurcation diagrams change character are:

\noindent
{\bf i. The Bifurcation Variety}

\bee\label{4.9}
\mathscr{B}  = \{ \vec \al \in R^k : (x,\la ) \;\;\text{such that}\;\; G=G_x=G_\la =0 \;\;\text{at}\;\; (x,\la ,\al )\} .
\ene

\noindent
{\bf ii. The Hysteresis Variety}

\bee\label{4.10}
\mathscr{H} =\{ \vec \al \in R^k : (x,\la ) \;\;\text{such that}\;\; G=G_x=G_{xx}=0 \;\;\text{at}\;\; (x,\la ,\al )\}.
\ene
and,

\noindent
{\bf iii. The Double Limit Variety}

\bee\label{4.11}
\mathscr{D} =\{ \vec \al \in R^k:(x_1,x_2,\la ),x_1 \neq x_2 \;\;\text{such that}\;\; G=G_x=0 \;\;\text{at}\;\; (x_i,\la ,\al ),i=1,2\} .
\ene

We compute these here since the derivations are not given in \cite{20}. For $\mathscr{B}$, we need
$$
G_{1x}=4\overline \epsilon x^3 +\al +2\beta x=0
$$
and
$$
G_{1\la}=\delta =0.
$$
However, $\delta\neq 0$ by \eqref{4.8}, and hence the bifurcation set is just the null set
\bee\label{4.12}
\mathscr{B}=\varnothing .
\ene
For $\mathscr{H}$, we need
\begin{align*}
G_{1x} &= 4\overline \epsilon x^3+\al +2\beta x=0\\
G_{1xx} &= 12\overline \epsilon x^2+2\beta =0
\end{align*}
which yield
\bee\label{4.13}
\mathscr{H} =\left\{ \left(\frac{\al}{8\overline \epsilon} \right)^2 =-\left(\frac{\beta}{6\overline \epsilon} \right)^3, \;\; \beta\leq 0\right\} .
\ene
Similarly, using \eqref{4.11}, it is straightforward to derive the double limit set
\bee\label{4.14}
\mathscr{D} =\{\al =0,\beta\leq 0\} .
\ene
In the $(\al ,\beta )$ plane, the $(\la ,x)$ bifurcation diagrams change character across the curves \eqref{4.12}, \eqref{4.13} and \eqref{4.14}, so that there are different multiplicities of steady-states in the regions they de-limit. We shall consider this in detail in the next section.

\subsection{A Second Quartic Normal Form}

Repeating the above steps for the other possible distinct normal form
\bee\label{4.15}
h_2(x,\la )=\overline \ep x^4+\delta\la x,
\ene
our germ \eqref{4.1} takes this form for
\begin{align}\label{4.16}
\al_4&=\overline \ep\notag\\
\al_3&=\al_2=\al_0=0\notag\\
\al_1&=\delta\la .
\end{align}
For the normal form \eqref{4.15}, the universal unfolding is 
\bee\label{4.17}
G_2(x,\la )=\overline \ep x^4+\delta\la x+\al +\beta\la +\gamma x^2
\ene
with defining conditions
\bee\label{4.18}
g_{xx}=g_{xxx}=g_\la =0,
\ene
non-degeneracy conditions which are automatically satisfied, and the solution of the recognition problem yielding the condition
\bee\label{4.19}
\begin{vmatrix}
0 &0 &g_{x\la} &0 &g_{xxxx}\\
0 &g_{\la x} &g_{\la\la} &g_{\la xx} &g_{\la xxx}\\
G_{2\al} &G_{2\al x} &G_{2\al\la} &G_{2\al xx} &G_{2\al xxx}\\
G_{2\beta} &G_{2\beta x} &G_{2\beta\lambda} &G_{2\beta xx} &G_{2\beta xxx}\\
G_{2\gamma} &G_{2\gamma x} &G_{2\gamma\lambda} &G_{2\gamma xx} &G_{2\gamma xxx} \end{vmatrix} \neq 0.
\ene
For \eqref{4.1}/\eqref{4.16}, \eqref{4.18} is satisfied, while \eqref{4.19} yields 
\bee\label{4.20}
\overline \ep \delta\neq 0, \quad\text{or}\quad \al_1\al_4\neq 0 .
\ene

We derive the transition varieties for this case since derivations are not provided in \cite{20}. 

\noindent
\underline{For $\mathscr{B}$},
\begin{align*}
G_{2x}&= 4\overline \ep x^3 +\delta\la +2\gamma x=0\\
G_{2\la}&= \delta x+\beta =0
\end{align*}
which, together with \eqref{4.17}, yield
\bee\label{4.21}
\mathscr{B}: \frac{\overline \ep \beta^4}{\delta^4} +\frac{\gamma\beta^2}{\delta^2} +\al=0 .
\ene
\underline{For $\mathscr{H}$}:
$$
G_{2xx}=0 \Rightarrow\gamma =-6\overline \ep x^2
$$
and
$$
G_{2x} = 0\Rightarrow\delta\la =8\overline \ep x^3 .
$$
Together, these yield
$$
\la^2=-8\gamma^3 /27\delta^2 \overline \ep .
$$
Using these in \eqref{4.17} yields the hysteresis curve:
\bee\label{4.22}
\mathscr{H}:\left(\alpha +\frac{\gamma^2}{12 \overline \epsilon}\right)^2 +\frac{8\gamma^3\beta^2}{27\delta^2\overline \epsilon} =0.
\ene
Similarly, the double limit curve $\mathscr{D}$ is:
\bee\label{4.23}
\mathscr{D}:4\alpha =\gamma^2, \quad \gamma\leq 0 .
\ene

In the next two subsections, we summarize similar results for the two distinct cubic normal forms, but omit the details. Then we briefly mention the four possible quadratic normal forms for even more degenerate cases, before concluding the section with the general, least degenerate case.

\subsection{The Pitchfork}

For our germ \eqref{4.1} to have the cubic normal form for the well-known pitchfork bifurcation
\bee\label{4.24}
h_3(x,\la )=\overline \ep x^3+\delta\la x
\ene
we require
\begin{gather}
\al_4=\al_2=\al_0=0\notag\\
\al_3=\overline \ep , \quad \al_1=\delta\la .
\end{gather}
This will have a universal unfolding \cite{20}
\bee\label{4.26}
G_3=\overline \ep x^3+\delta\lambda x+\al +\beta x^2
\ene
provided
$$
\begin{vmatrix} 0 &0 &h_{3x\la} &h_{3xxx}\\
0 &h_{3\la x} &h_{3\la\la} &h_{3\la xx}\\
G_{3\al} &G_{3\al x} &G_{3\al\la} &G_{3\al xx}\\
G_{3\beta} &G_{3\beta x} &G_{3\beta\la} &G_{3\beta xx} \end{vmatrix} =\delta\neq 0 .
$$
The well-known transition varieties, generalized to our notation, are:
\begin{align}
\mathscr{B} &: \quad \al =0\label{4.27}\\
\mathscr{H} &: \quad \al =\beta^3/27\overline \ep^2 \label{4.28}\\
\mathscr{D} &: \quad \varnothing .
\end{align}

\subsection{The Winged Cusp}

The other distinct cubic normal form
\bee\label{4.30}
h_4(x,\la )=\overline \ep x^3+\delta\la^2
\ene
requires
\begin{align}\label{4.31}
\al_4 &=\al_2=\al_1=0\notag\\
\al_3&=\overline \ep , \;\; \al_0 =\delta\la^2 .
\end{align}
This has a universal unfolding \cite{20}
\bee\label{4.32}
G_4(x,\la )=\overline \ep x^3+\delta\la^2+\al +\beta x+\gamma\la x
\ene
provided
$$
\begin{vmatrix} 0 &0 &h_{4x\la} &h_{4xxx}\\
0 &h_{4\la x} &h_{4\la\la} &h_{4\la xx}\\
G_{4\al} &G_{4\al x} &G_{4\al\la} &G_{4\al xx}\\
G_{4\beta} &G_{4\beta x} &G_{4\beta\la} &G_{4\beta xx} \end{vmatrix} =-12\delta\ep \neq 0.
$$
The transition varieties, for our unfolding $G_4$, are:
\begin{align}
\mathscr{B}  &: \quad \alpha =2x^3-\frac{\gamma^2}{4} x^2, \quad \beta =-3x^2+\gamma^2x/2 \label{4.33}\\
\mathscr{H}  &: \quad \alpha\gamma^2 +\beta^2=0, \quad \alpha \leq 0 \label{4.34}\\
\mathscr{D}  &: \quad \varnothing \label{4.35}.
\end{align}

\subsection{Quadratic Normal Forms}

Since our system of ODEs has many parameters, we may clearly have more degenerate (higher codimension) cases corresponding to any of the distinct quadratic normal forms
\begin{align}
h_5(x,\la ) &= \overline \ep x^2+\delta\la\label{4.36}\\
h_6(x,\la ) &= \overline \ep (x^2-\la^2 )\label{4.37}\\
h_7(x,\la ) &= \overline \ep (x^2+\la^2 )\label{4.38}\\
h_8(x,\la ) &= \overline \ep x^2+\delta\la^3\label{4.39}
\end{align}
or
\bee\label{4.40}
h_9(x,\la ) = \overline \epsilon x^2 +\delta \la^4
\ene
Each of these is obtained by matching our germ \eqref{4.1} to the appropriate form, with the defining and non-degeneracy conditions automatically being satisfied (because \eqref{4.1} is polynomial). Solving the recognition problem \cite{20}, the corresponding unfoldings are respectively
\begin{align}
G_5(x,\la )&=\overline \ep x^2+\delta\la\label{4.41}\\
G_{6,7}(x,\la )&=\overline \ep (x^2+\delta\la^2 +\al )\label{4.42}
\end{align}
(with $\delta <0$ for \eqref{4.37} and $\delta >0$ for \eqref{4.38})
\begin{align}
G_8(x,\la)&=\overline \ep x^2+\delta\la^3 +\al +\beta\la\label{4.43}\\
G_9(x,\la )&=\overline \ep x^2+\delta\la^4+\al +\beta\la +\gamma\la^2\label{4.44}
\end{align}
with determinant conditions \cite{20} for the cases \eqref{4.43} and \eqref{4.44} which may be straightforwardly enforced as in previous cases. The $\mathscr{B}$, $\mathscr{H}$, and $\mathscr{D}$ curves for these cases are straightforward generalizations of those given in \cite{20}, and they may be derived as for the quartic and cubic cases.

\subsection{General Case}

Finally, we include the most general possibility where, for arbitrary parameters in the CGLE \eqref{2.1}, we have the germ \eqref{4.1} with all $\al_i$ non-zero. Treating \eqref{4.1} itself as the unfolding, with $\al_0$ the bifurcation parameter $\la$, the transition varieties in the $(\al_1,\al_2)$ plane are:
\begin{align}
\mathscr{B} \;: \quad &\varnothing\label{4.45}\\
\mathscr{H} \;: \quad &\al_2=-6\al_4x^2-3\al_3x\notag\\
&\al_1=8\al_4x^3+3\al_3x^2\label{4.46}\\
\mathscr{D} \;: \quad &\text{identical to }\mathscr{H} \text{ (see Theorem 1 below)}\label{4.47}
\end{align}

\begin{theorem} The Double Limit Variety for \eqref{4.1} is identical to the Hysteresis Variety of \eqref{4.46}.
\end{theorem}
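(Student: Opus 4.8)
The plan is to reduce each transition variety to an explicit condition on $(\al_1,\al_2)$, with $\al_3,\al_4$ held fixed and $\la=\al_0$ free, and then to match the locus obtained for $\mathscr{D}$ against the parametrization \eqref{4.46} of $\mathscr{H}$. Recall that $\mathscr{H}$ came from eliminating $x$ between $g_x=0$ and $g_{xx}=0$ (the equation $g=0$ only fixing the attendant $\la$), i.e.\ from the requirement that the quartic $\al_4x^4+\al_3x^3+\al_2x^2+\al_1x+\la$ have a triple root for some $\la$. Reading \eqref{4.11} the same way, $\mathscr{D}$ requires two \emph{distinct} points $x_1\neq x_2$ at which $g=g_x=0$ for one common $\la$; since a simultaneous zero of $g$ and $g_x$ is a double root, this says exactly that the quartic above is, for some $\la$, the square of a quadratic $\al_4(x^2-Px+Q)^2$ with $P=x_1+x_2$ and $Q=x_1x_2$. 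This ``perfect-square'' reformulation is the device I would build the proof around, since it replaces the two-point condition by a single factorization.

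Concretely, I would write the four scalar equations $g(x_i)=0,\ g_x(x_i)=0$ $(i=1,2)$ and, using $x_1\neq x_2$, pass to the divided differences $[g(x_1)-g(x_2)]/(x_1-x_2)$ and $[g_x(x_1)-g_x(x_2)]/(x_1-x_2)$; these are free of $\la$ and, rewritten in the symmetric functions $P$ and $Q$, become polynomial relations among $P,Q,\al_1,\al_2$. Equivalently, matching $\al_4(x^2-Px+Q)^2$ coefficient-by-coefficient against the quartic gives $P=-\al_3/(2\al_4)$ at once, together with $\al_2$ and $\al_1$ expressed through the single surviving unknown $Q$ (equivalently $\la$). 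Eliminating $Q$ then yields the defining equation of $\mathscr{D}$ in the $(\al_1,\al_2)$ plane. All of this is routine symmetric-function algebra and I would not expect trouble here.

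The crux, and what I expect to be the main obstacle, is the final reconciliation of this $\mathscr{D}$-locus with the cubic-type curve \eqref{4.46}. One genuine point of contact is easy to exhibit: in the coalescence limit $x_1\to x_2$ the constraint $P=-\al_3/(2\al_4)$ forces $x_1=x_2=-\al_3/(4\al_4)$, which is precisely the quadruple-root configuration sitting at the cusp of \eqref{4.46}, so the two varieties certainly meet there. Proving that they agree \emph{everywhere} is another matter: it amounts to showing that the two-point equal-height factorization condition and the single-point third-order-contact condition $g_x=g_{xx}=0$ cut out the same curve, which one would try to verify through the resultant identity $\mathrm{Res}_x(g_x,g_{xx})=0$ after the elimination, while tracking the reality and ordering constraints on $x_1,x_2$. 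Because the elimination for $\mathscr{D}$ leaves $\al_1,\al_2$ of low degree in the single parameter $Q$, establishing this global coincidence with the higher-degree branch \eqref{4.46} is exactly where the weight of the theorem rests, and where I would concentrate the effort and the scrutiny.
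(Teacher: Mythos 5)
Your setup for $\mathscr{D}$ is sound, and the perfect-square device is the right way to see the structure: since $\la=\al_0$ enters $g$ additively, two distinct points with $g=g_x=0$ at a common $\la$ force $g=\al_4(x^2-Px+Q)^2$, whence $P=x_1+x_2=-\al_3/(2\al_4)$, $\al_2=\al_4(P^2+2Q)$, $\al_1=-2\al_4PQ$, and eliminating $Q$ gives the \emph{straight line} $\al_1=-P\al_2+\al_4P^3$ (a ray, once the reality constraint $Q<P^2/4$ from $x_1\neq x_2$ is imposed). But the step you defer to the end --- showing this locus coincides with \eqref{4.46} --- is not a technical chore to be postponed: it cannot be done, and your own computation shows why. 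The parametrized curve \eqref{4.46} is a cuspidal, semicubical-type curve in the $(\al_1,\al_2)$ plane, and a fixed straight line cannot equal it; the two meet at the quadruple-root point $x_1=x_2=-\al_3/(4\al_4)$ (the cusp, exactly the coalescence point you identified) but along no arc. Concretely, for $\al_3=0$, $\al_4=1$ your elimination gives $\mathscr{D}=\{\al_1=0,\ \al_2<0\}$, while \eqref{4.46} is $(\al_1/8)^2=-(\al_2/6)^3$: they share only the origin. For the same reason your proposed verification via $\mathrm{Res}_x(g_x,g_{xx})=0$ must fail --- that resultant \emph{is} the defining equation of $\mathscr{H}$, and it is not proportional to the linear equation your elimination produces for $\mathscr{D}$.

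The comparison with the paper's own proof is instructive, because your attempt, pushed to completion, actually refutes the theorem rather than proving it. The paper follows your first (divided-difference) route: the differences of the four conditions give \eqref{4.49a}--\eqref{4.49b}, and the sum $G_x(x_1)+G_x(x_2)=0$, rewritten via \eqref{4.50}, in fact factors as $(4b-3a^2)(2a\al_4+\al_3)=0$, where $a=x_1+x_2$ (the paper's $a\equiv x_1+x_2^2$ is a typo) and $b=x_1^2+x_1x_2+x_2^2$. The paper retains only the first factor, asserting \eqref{4.51}; but $4b-3a^2=(x_1-x_2)^2$, so that branch forces $x_1=x_2$, precisely the configuration excluded by the definition \eqref{4.11} of $\mathscr{D}$ --- and it is no accident that it reproduces \eqref{4.46}, since in the coalescence limit the divided differences \eqref{4.49a}--\eqref{4.49b} degenerate into $g_x=g_{xx}=0$, i.e.\ into the hysteresis system itself. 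The discarded factor $2a\al_4+\al_3=0$ is your $P=-\al_3/(2\al_4)$ and carries all the genuine double-limit points. This reading is corroborated by the paper's own Section 4.1, where for the shifted germ ($\al_3=0$) the double limit set \eqref{4.14} is a ray while the hysteresis set \eqref{4.13} is a semicubical parabola, manifestly distinct; the affine shift $x\mapsto x-\al_3/(4\al_4)$ relating that germ to the general one cannot merge the two varieties. So the gap in your proposal --- the unproven ``global coincidence'' --- is unfillable: the correct conclusion of your elimination is that $\mathscr{D}$ is the ray issuing from the cusp of $\mathscr{H}$, not $\mathscr{H}$ itself; your instinct to concentrate scrutiny on exactly that step was well placed.
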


\begin{proof} Using \eqref{4.1} and \eqref{4.11}, $\mathscr{D}$ is defined by the equations
\begin{subequations}
\begin{align}
G(x_1,\la )&=0\label{4.48a}\\
G(x_2,\la )&=0\label{4.48b}\\
G_x(x_1,\la )&=0\label{4.48c}\\
G_x(x_2,\la )&=0 .\label{4.48d}
\end{align}
\end{subequations}
Canceling the trivial solution $x_1=x_2$, the equations obtained from the difference of \eqref{4.48a} and \eqref{4.48b}, and of \eqref{4.48c} and \eqref{4.48d}, yield respectively
\begin{subequations}
\begin{align}
\alpha_1&=-a(2b-a^2)\alpha_4-b\alpha_3-a\alpha_2\label{4.49a}\\
\alpha_2&=-\frac12 (4b\alpha_4+3a\alpha_3)\label{4.49b}
\end{align}
\end{subequations}
where $a\equiv x_1+x_2^2$, $b\equiv x_1^2+x_1x_2+x_2^2$. Using \eqref{4.49b} in \eqref{4.49a}, these yield
\begin{subequations}\label{4.50}
\begin{align}
\alpha_1&=\left(\frac{3a^2}{2} -b\right)\alpha_3+a^3\alpha_4\label{4.50a}\\
\alpha_2&=-\frac{3a}{2} \alpha_3-2b\alpha_4 .\label{4.50b}
\end{align}
\end{subequations}
The equations \eqref{4.48a} and \eqref{4.48b} may be considered to define the bifurcation parameter $\al_0$ which we do not require here. However, \eqref{4.48c} and \eqref{4.48d} independently define $\al_1$ and thus far only their difference has been used. In order to incorporate $\al_1$, we consider the sum of \eqref{4.48c} and \eqref{4.48d} written in terms of a and b as:
$$
4\al_4a(3b-2a^2)+3\al_3(2b-a^2)+2\al_2a+2\al_1=0.
$$
Using \eqref{4.50} in this equation and simplifying yields
\bee\label{4.51}
b=\frac{3a^2}{4}.
\ene
Using this in \eqref{4.50}  yields the parametric equations for $\mathscr{D}$:
\begin{subequations}
\begin{align}
\alpha_1&=a^2\big(\frac{3\alpha_3}{4}+\alpha_4a\big)\label{4.52a}\\
\alpha_2&=-\frac{3a}{2}\big(\alpha_3+ \alpha_4a\big) .\label{4.52b}
\end{align}
\end{subequations}

The re-parametrization $a=2x$ puts this into exactly the form \eqref{4.46} of the hysteresis variety, thus proving the claim.
\end{proof}

Note that the $\mathscr{H}$ curve is parametrized in terms of $x$ (with $\al_3, \al_4$ being chosen values). Also, given the non-degenerate nature of this general case, it is not surprising that there is only one distinct transition variety.

\section{Bifurcation Diagrams and Effects on the Dynamics}

Having mapped out the $\mathscr{B}$, $\mathscr{H}$, and $\mathscr{D}$ curves for the various possible distinct quartic and cubic normal forms, we now proceed in this section to consider the various bifurcation diagrams in the regions which they define in $(\alpha ,\beta )$ space. These will then give us the multiplicities and stabilities of the various co-exisitng steady states of \eqref{2.4a}, \eqref{2.4b} and \eqref{2.4c} (or plane wave solutions of \eqref{2.1}) in each region. In turn, these also enable us to consider dynamic features of the plane wave solutions. The dynamics will include hysteretic behaviors among co--existing plane waves. We will also find regimes of \underline{isolated} plane wave behavior, both for a plane wave branch which co-exists with other branches but cannot interact with them, as well as those which actually occur only in isolation.

We first list examples of representative sets of parameters for which we may have the various degenerate cases considered in Section 4. 

\begin{itemize}
\item[a.] For the Quartic Fold of Section Section 4.1, typical parameters are:
\begin{itemize}
\item[i.] $b_1=0.0845,$ $b_3=-0.0846,$ $b_5=0.0846,$ $c_1=c_3=-c_5=1$, $\epsilon =0.5$, $v=0.1$, $\omega =0$.
\item[ii.] $b_1=b_5=0.01696$, $b_3=-0.0206$, $c_1=1$, $c_3=1.25$, $c_5=-1,$ $\epsilon =0.5$, $v=0.1$, $\omega =0$.
\end{itemize}
\item[b.] For the quartic normal form of Section 4.2:
\begin{itemize}
\item[i.] $b_1=2.035$, $b_3=29.274$, $b_5=9.8496$, $c_1=-0.1$, $c_3=-1$, $c_5=0.08$, $\epsilon =0.3$, $v=0.3$, $\omega =0.1$.
\end{itemize}
\item[c] For the Pitchfork case of Section 4.3:
\begin{itemize}
\item[i.] $b_1=0.0904$, $b_3=0.0679$, $b_5=0.1811$, $c_1=-0.4$, $c_3=0.35$, $c_5=0.8$, $\epsilon =0.2$, $v=0.01$, $\omega =-0.9$.
\item[ii.] $b_1=0.0904$, $b_3=0.0823$, $b_5=-0.1808$, $c_1=-0.4$, $c_3=0.35$, $c_5=-0.8$, $\epsilon =0.2$, $v=0.01$, $\omega =-0.9.$
\end{itemize}
\item[d.] For the Winged Cusp of Section 4.4:
\begin{itemize}
\item[i.] $b_1=0.000923$, $b_3=+.00005548$, $b_5=0.0013$, $c_1=0.5$, $c_3=-0.03$, $c_5=-0.7$, $\epsilon =0.01$, $v=0.1$, $\omega =0.15$.
\end{itemize}
\end{itemize}

For the winged cusp unfolding \eqref{4.32} in the particular form
\begin{equation*}
G_1(x,\lambda )=x^3+\lambda^2+\alpha +\beta x+\gamma\lambda x=0,
\end{equation*}
the transition varieties \eqref{4.33} and \eqref{4.34} are shown in the $(\alpha ,\beta )$ plane in Figure \ref{Figure 7}a,b,c for  $\gamma < 0$, $\gamma =0$, and $\gamma >0$, respectively. They divide the $(\alpha ,\beta )$ space into seven distinct regions. As mentioned earlier, the $(\lambda ,x)$ bifurcation diagrams are isomorphous or `persistent' or of similar form within each region, and they change form across the transition varieties (or `nonpersistence' curves) as one crosses into an adjacent region. The representative bifurcation diagrams in each of the seven regions are shown in Figure \ref {Figure 8}, and they give us a comprehensive picture of the co--existing plane wave solutions of \eqref{2.1} and their stability (given by the eigenvalues of the Jacobian, or here just the sign of $G_x$) in each region. Hence, as we shall consider next, one also has a clear picture of the ensuing dynamics from the plane wave interactions.

\vspace*{.3in}

\begin{figure}
\begin{center} 
\includegraphics[width=350pt]{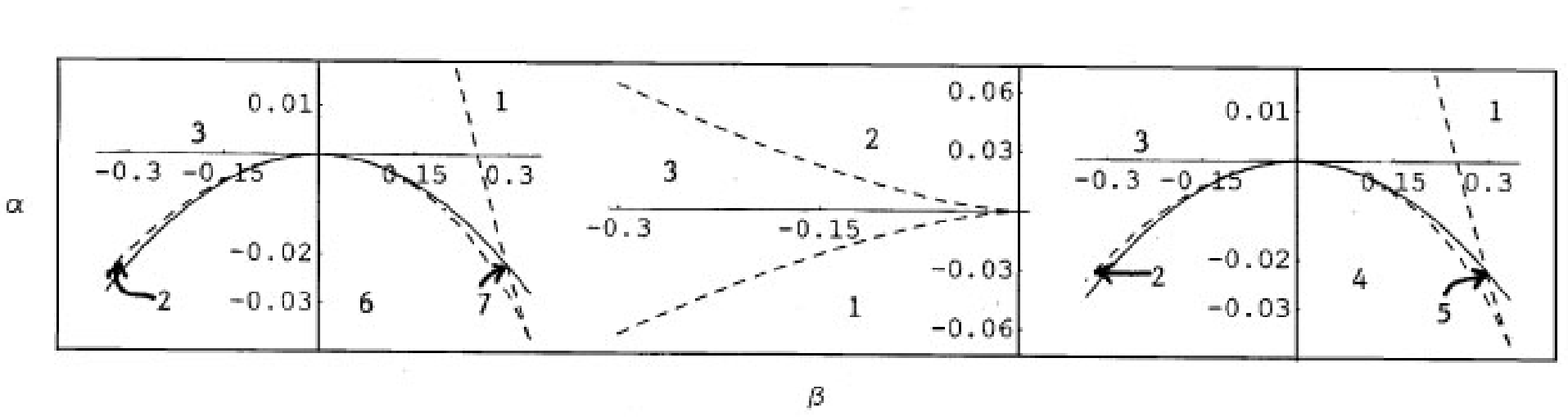}
\end{center}
\caption{Transition varieties for the winged cusp \eqref{4.32} with $\overline \epsilon =1=\delta$ for the cases $\gamma <0$, $\gamma =0$, and $\gamma >0$, respectively. $\mathscr H$ is in solid lines, and $\mathscr B$ is dashed.} \label{Figure 7}
\end{figure}

\begin{figure}
\begin{center} 
\includegraphics[width=350pt]{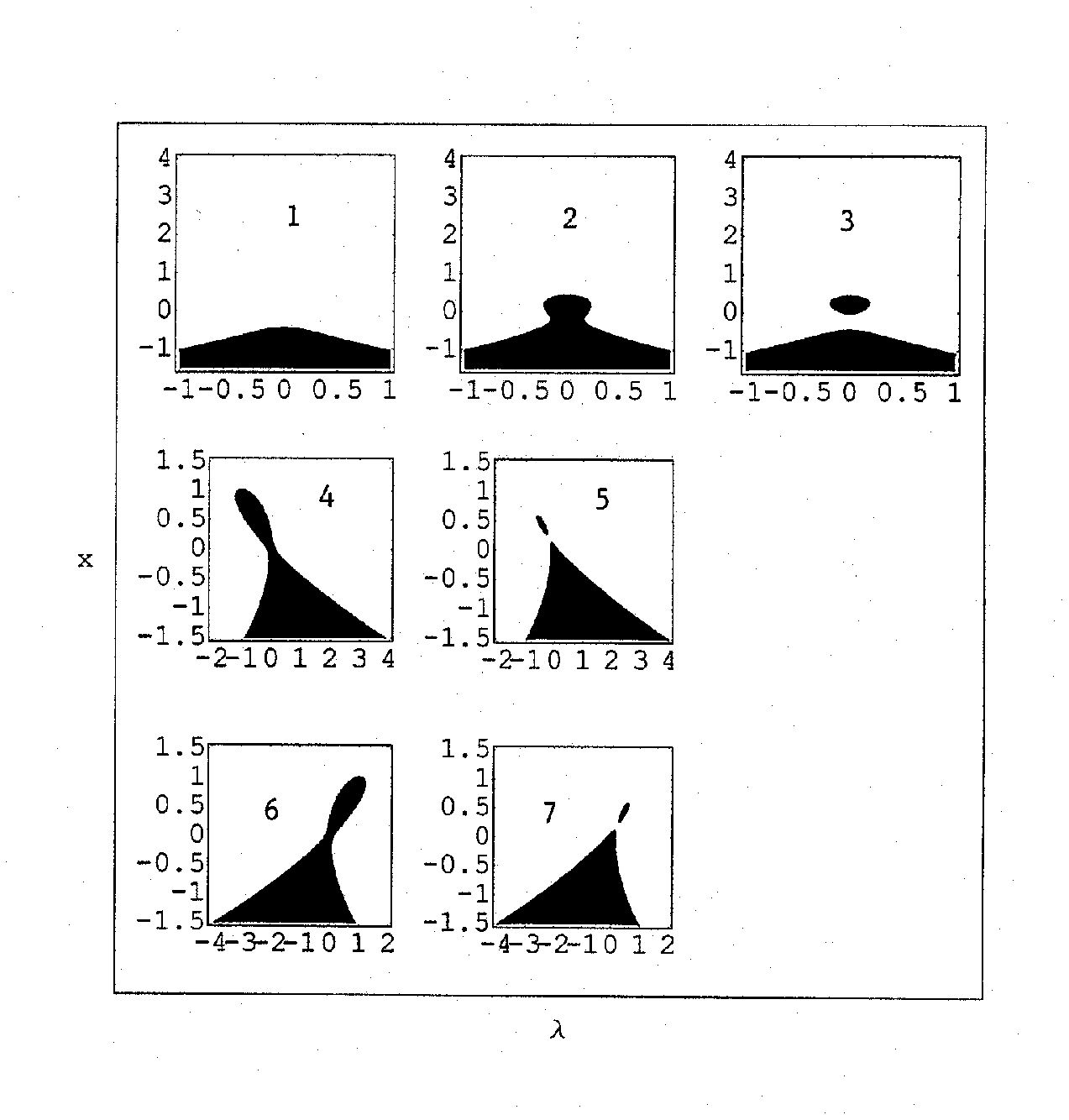}
\end{center}
\caption{The $(\lambda ,x)$ bifurcation diagrams in the regions \texttt{1}-\texttt{7} of Figure \ref{Figure 7}, respectively.}
\label{Figure 8}
\end{figure}
First, note the mushroom shaped bifurcation diagram in Figure \ref{Figure 8}b for region \texttt{2} of Figure \ref{Figure 7}. Clearly, there are two distinct ranges of $\lambda$ (at the two ends of the mushroom) where three plane waves co-exist (with the central one being unstable). Thus the dynamics exhibits hysteresis. For instance, if $\lambda$ is decreased from large values, one stays on the lower branch until point A before jumping to the upper branch. If $\lambda$ is then increased, one stays on the upper branch until B and then jumps back down to the lower one. Similar hysteresis occurs in regions \texttt{4} through \texttt{7} of Figure \ref{Figure 7} as seen in the corresponding bifurcation diagrams of Figure \ref{Figure 8}d--g. In each case, hysteresis occurs between the upper and lower fixed points in the range of $\lambda$ with three co--existing solutions (the central one is always unstable).

Another type of behavior is the isola, i.e., an isolated branch of solutions unconnected to the primary solution (the one at $\lambda\to\pm\infty$). Such isola type behavior is seen in Figure \ref{Figure 8}c,e,g corresponding to regions \texttt{3}, \texttt{5}, and \texttt{7} of Figure \ref{Figure 7} . In each case, the isola co-exists with the primary solution branch and is the chosen branch or not according to the initial conditions. However, once chosen, the dynamics is on the isola while $\lambda$ is in the domain of its existence once we leave this domain, the solution cannot jump to the primary branch and just disappears.

Next, we consider the normal form \eqref{4.15} in Section 4.2. Considering the unfolding \eqref{4.17} in the particular form
$$
G_2(x,\lambda )=x^4-\lambda x+\alpha +\beta\lambda+\gamma x^2=0,
$$
the transition varieties \eqref{4.21}, \eqref{4.22}, and \eqref{4.23} are shown in Figure \ref{Figure 9}a,b,c for the cases $\gamma >0$, $\gamma =0$, and $\gamma <0$ respectively. Note in particular, a significant correction to \cite{3} in the $\mathscr{H}$ curve of Figure \ref{Figure 9}c. The $\mathscr{H}$ curve \eqref{4.22} represents a pair of straight lines in the $(\alpha ,\beta )$ plane, rather than the incorrect form
$$
\alpha+\frac{\gamma^2}{12\overline \epsilon} +\frac{8\gamma^3\beta^2}{27\delta^2\overline \epsilon}=0
$$
in \cite{20}. In Figure \ref{Figure 9}c, one consequence is two new regions or domains \texttt{13} and \texttt{14} of the $(\alpha ,\beta )$ space. Also, the bifurcation plots in the domains \texttt{3}, \texttt{4}, \texttt{5} and \texttt{8} are significantly modified from those given in \cite{20} for the corresponding regions.

\begin{figure}
\begin{center} 
\includegraphics[width=350pt]{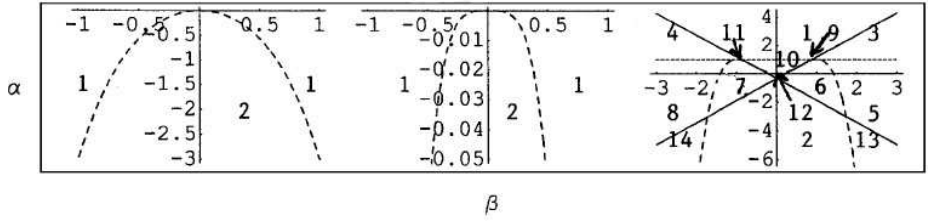}
\end{center}
\caption{The transition varieties for the quartic normal form \eqref{4.17} with  $\delta =-1$ for the cases $\gamma >0$, $\gamma =0$, and $\gamma <0$, respectively. $\mathscr H$ is in solid lines, $\mathscr B$ is dashed and the double limit curve $\mathscr D$ is in fine dashing. The regions \texttt{1}-\texttt{14} which they delimit are shown.}
\label{Figure 9}
\end{figure}

The bifurcations plots in the fourteen regions in Figure \ref{Figure 9}c (Figure \ref{Figure 9}a,b feature only some of the regions) are shown in Figures \ref{Figure 10} and \ref{Figure 11}. Note that there are no regions of isola behavior. In regions \texttt{3}, \texttt{4}, \texttt{5}, and \texttt{8}, there is only one branch of solutions, rather than two as shown in Figure \ref{Figure 10} (case 10) of \cite{20}. Of these, the segments BC and DE are unstable in cases \texttt{3} and \texttt{5}, so that the hysteretic behavior of the solutions will consist of transitions from the stable plane waves on branch AB to those on branch CD as $\lambda$ is increased past point B, and a reverse transition when it is decreased through C. Similarly, in regions \texttt{4} and \texttt{8} where only segment BC is unstable, hysteresis occurs with a transition from the plane wave on branch DE to branch AB if $\lambda$ is decreased through D, a transition from branch CD to branch AB when $\lambda$ is decreased through D, and a transition from CD to either AB or DE (depending on system bias, noise et cetera) as $\lambda$ is increased through C. Analogous hysteresis behavior is clearly possible in regions \texttt{7} and \texttt{11}, while regions \texttt{9}, \texttt{10}, and \texttt{12} feature hysteresis between co--existing stable plane wave solutions on \underline{distinct} solution branches. In the two new regions \texttt{13} and \texttt{14} of Figure \ref {Figure 9}c (which were missing in \cite{20}), the bifurcation plots in Figures \ref{Figure 11}m and \ref{Figure 11}n show only two co--existing plane wave solutions in each $\lambda$ range, unlike the adjacent regions \texttt{5} and \texttt{8} of Figure \ref{Figure 9}c where the bifurcation plots Figures \ref{Figure 10}e and \ref{Figure 10}h have $\lambda$ ranges with four coeval solutions.

\begin{figure} 
\begin{center} 
\includegraphics[width=350pt]{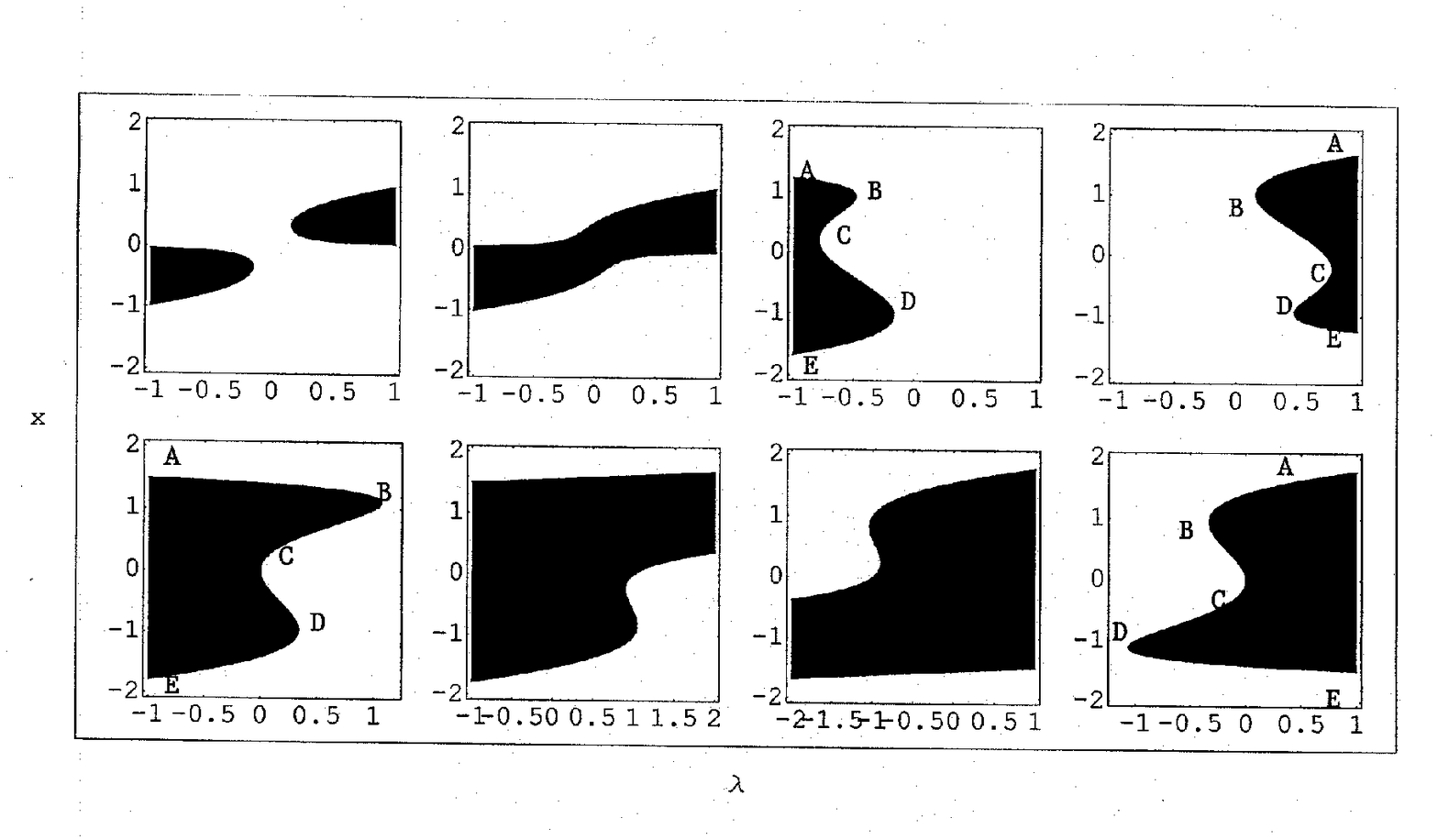}
\end{center}
\caption{Bifurcation diagrams in the regions \texttt{1}-\texttt{8} of Figure \ref{Figure 9}c.}
\label{Figure 10}
\end{figure}

\begin{figure}
\begin{center} 
\includegraphics[width=350pt]{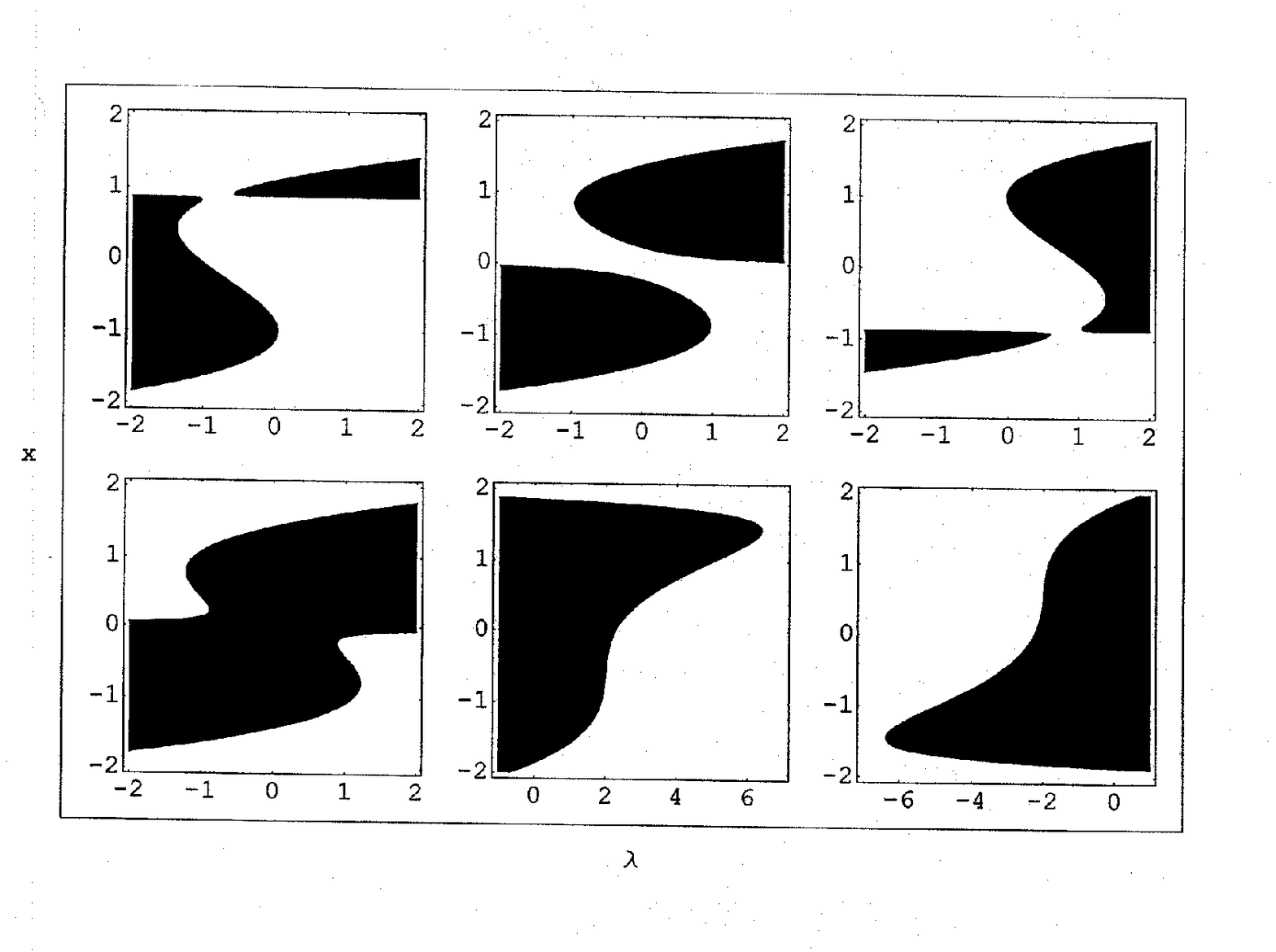}
\end{center}
\caption{Bifurcation diagrams in the regions \texttt{9}-\texttt{14} of Figure \ref{Figure 9}c.}
\label{Figure 11}
\end{figure}

\vspace*{.2in}

For the very degenerate cases discussed in Section 4.5 and corresponding to quadratic normal forms, the corresponding transition varieties as well as the bifurcation plots and resulting dynamics in the regions of $(\alpha ,\beta )$ which they delimit may be deduced from the relevant cases in Figures 4.1-4.3 of \cite{20}. In particular, cases \eqref{4.42}, \eqref{4.43} and \eqref{4.44} show isola, hysteresis, and double isola behaviors respectively. 

Let us consider the general case in Section 4.6 next. Since the results are entirely new, we need to consider the issue of the stability of the $(\lambda ,x)\equiv (\alpha_0, x)$ bifurcation diagrams in various regions of the $(\alpha_1, \alpha_2 )$ plane. Using \eqref{4.1} and the Chain Rule,
\begin{equation*}
\frac{\mathrm{d}G}{\mathrm{d}\lambda} \equiv \frac{\mathrm{d}G}{\mathrm{d}\alpha_0} =\frac{\pa G}{\pa x}\frac{\mathrm{d}x}{\mathrm{d}\lambda} +\frac{\pa G}{\pa\lambda} =0,
\end{equation*}
so that
\begin{equation}\label{5.1}
\frac{\pa G}{\pa x}=-\frac{1}{(\mathrm{d}x/\mathrm{d}\lambda )} .
\end{equation}
Thus, the Jacobian and its eigenvalue $G_x$ (these are identical for a one-dimensional system such as \eqref{4.1}) are negative, and the corresponding fixed-point branch of the $(\lambda ,x)$ plane is stable, for segments of the bifurcation plot where
\begin{equation}\label{5.2}
\frac{\mathrm{d}x}{\mathrm{d}\lambda} >0, \quad\text{stable.}
\end{equation}
Conversely, segments with $\frac{\mathrm{d}x}{\mathrm{d}\lambda}<0$ are unstable.

Finally, let us consider the dynamics and interactions of plane waves for the most general case of Section 4.6. The coincident transition varieties $\mathscr{H}$ and $D$ in \eqref{4.46}/\eqref{4.47} are shown in Figures \ref{Figure 12}a-\ref{Figure 12}h for various combinations of $(\alpha_3,\alpha_4)$ values. As is readily apparent, the configurations in Figures \ref{Figure 12}a-c are the independent ones corresponding to centered and off-centered cusps and a parabolic variety curve respectively -- the other cases are simple reflections of these. For Figure \ref{Figure 12}a  with $(\alpha_3,\alpha_4) =(0,1)$, the transition variety divides the $(\alpha_1,\alpha_2)$ space into two distinct regions \texttt{1} and \texttt{2}. The bifurcation plots in the two regions are shown in Figures \ref{Figure 13}a and \ref{Figure 13}b. As per \eqref{5.1}, the segment(s) with $\frac{\mathrm{d}x}{\mathrm{d}\lambda}>0$ are stable, so that there is a unique stable plane wave for Figure \ref{Figure 13}a in region \texttt{1} of Figure \ref{Figure 12}a. By contrast, there are co--existing stable plane wave states in regions BC and DE of Figure \ref{Figure 13}b (for region \texttt{2} of Figure \ref{Figure 12}a). Thus, hysteretic dynamics occurs with a transition from BC to DE as $\lambda\equiv\alpha_0$ is decreased through C, and a reverse transition as $\lambda$ is increased on DE through D. For Figure \ref{Figure 12}b with $(\alpha_3,\alpha_4)=(1,1),$ the bifurcation plots in regions \texttt{1} and \texttt{2} are shown in Figures \ref{Figure 14}a and \ref{Figure 14}b respectively. Once gain, per \eqref{5.1}, the segments of these plots with positive slope correspond to stable plane waves. Thus, only the segment corresponding to DE in Figure \ref{Figure 14}a is a unique stable plane wave solution in region \texttt{1} of Figure \ref{Figure 12}b. For region \texttt{2} of Figure \ref{Figure 12}b, Figure \ref{Figure 14}b shows hysteresis between the stable plane wave branches BC and DE. For regions 1 and 2 of Figure \ref{Figure 12}c corresponding to $(\alpha_3,\alpha_4)=(1,0),$ the bifurcation plots in regions \texttt{1} and \texttt{2} are shown in Figures \ref{Figure 15}a and \ref{Figure 15}b. For the former, as per \eqref{5.1}, no stable plane waves exist. For the latter, there is a unique stable plane wave solution in the range of $\lambda (\alpha_0)$ corresponding to segment BC.

\begin{figure}
\begin{center} 
\includegraphics[width=350pt]{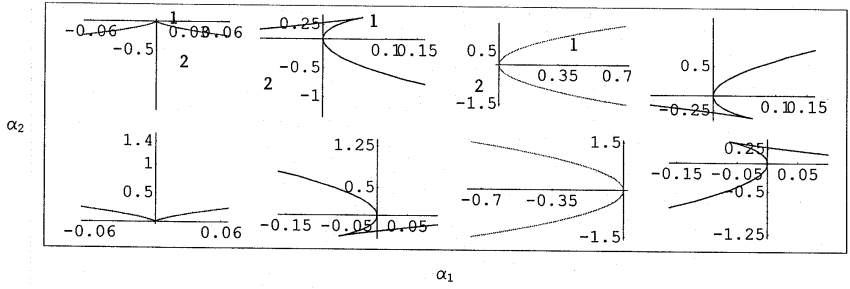}
\end{center}
\caption{Transition varieties for the general case \eqref{4.1} treated in Section 4.6. There is no $\mathscr B$ curve, and $\mathscr H$ and $\mathscr D$ are coincident. The regions they delimit are shown. The figures correspond respectively to $(\alpha_3, \alpha_4)$ values $(0,1)$, $(1,1)$, $(1,0)$, $(-1,1)$, $(-1,0)$, $(-1,-1)$, $(0,-1)$, and $(1,-1)$.}
\label{Figure 12}
\end{figure}

\begin{figure}
\begin{center} 
\includegraphics[width=350pt]{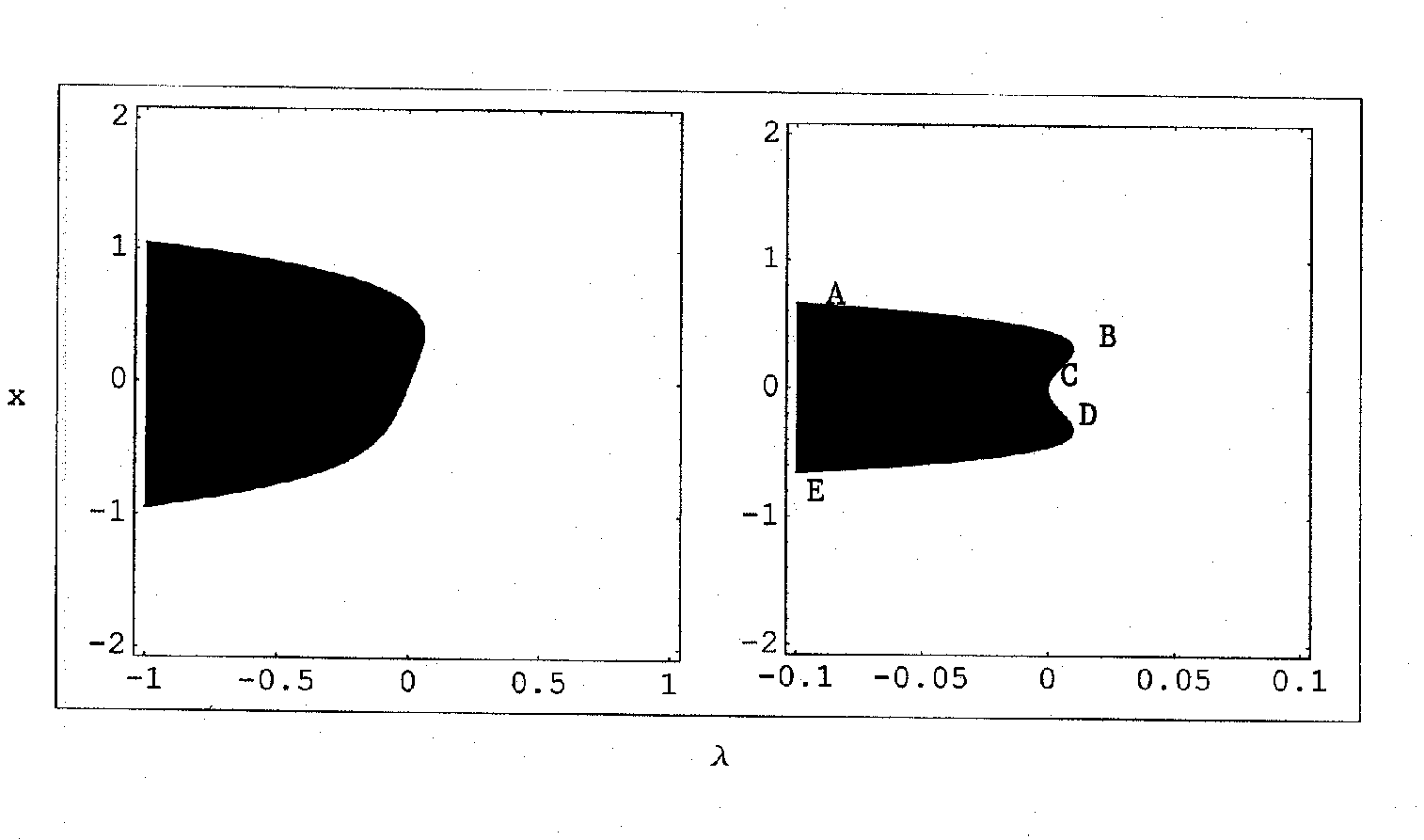}
\end{center} 
\caption{Bifurcation diagrams in regions \texttt{1} and \texttt{2} of Figure \ref{Figure 12}a, respectively.}
\label{Figure 13}
\end{figure}

Finally, for the sake of completeness, we mention an alternative interpretation of the general case using Catastrophe Theory \cite{25} (see \cite{20} for a discussion of the connection between this and the Singularity Theory approach). Treating \eqref{4.1} in a manner analogous to the Cusp Catastrophe,
\begin{align}
G_x&= 4\alpha_4x^3+3\alpha_3x^2+2\alpha_2x+\alpha_1\notag\\
&\equiv 4\alpha_4(x^3+\Gamma_2x^2+\Gamma_1x+\Gamma_0) =0\label{5.3}
\end{align}
with
\begin{align}
\Gamma_2&= \frac{3\alpha_3}{4\alpha_4},\notag\\
\Gamma_1&=\frac{\alpha_2}{2\alpha_4},\notag\\
\Gamma_0&= \frac{\alpha_1}{\alpha_4} . \label{5.4}
\end{align}
Defining
\begin{align}
q&=\frac13 \Gamma_1-\frac19 \Gamma_2^2\notag\\
r &= \frac16 (\Gamma_1\Gamma_2-3\Gamma_0)-\frac{\Gamma_2^3}{27}, \label{5.5}
\end{align}
the transition cusp curve between domains with one and three real solutions is given by
\begin{equation}\label{5.6}
q^3+r^2=0.
\end{equation}
For $(\alpha_3,\alpha_4)=(1,1)$ corresponding to Figures \ref{Figure 12}b and \ref{Figure 14}, the catastrophe surface \eqref{5.3} showing regions of one/three real solutions in the $(\alpha_1,\alpha_2)$ plane shown in Figure \ref{Figure 15}a. Figure \ref{Figure 15}b shows the corresponding cusp surface \eqref{5.6} in $(\alpha_1,\alpha_2)$ space. As mentioned, \cite{20} discusses the relationship between these plots and the Singularity Theory plots given in Figures \ref{Figure 12}b and \ref{Figure 14} for this case.

In concluding, we have comprehensively analyzed the co--existing plane wave solutions in various parameter regimes for the CGLE \eqref{2.1}. This includes transitions among co--existing states involving up to two domains with hysteresis, isolated parameter regimes with isola behavior, and the resulting dynamics. We should also stress that, since our governing equation \eqref{4.1} is of polynomial form, all the results in Sections 4 and 5 are globally (and not just locally) valid in their respective regimes, as of course are the results for the general case.

\begin{figure}
\begin{center} 
\includegraphics[width=350pt]{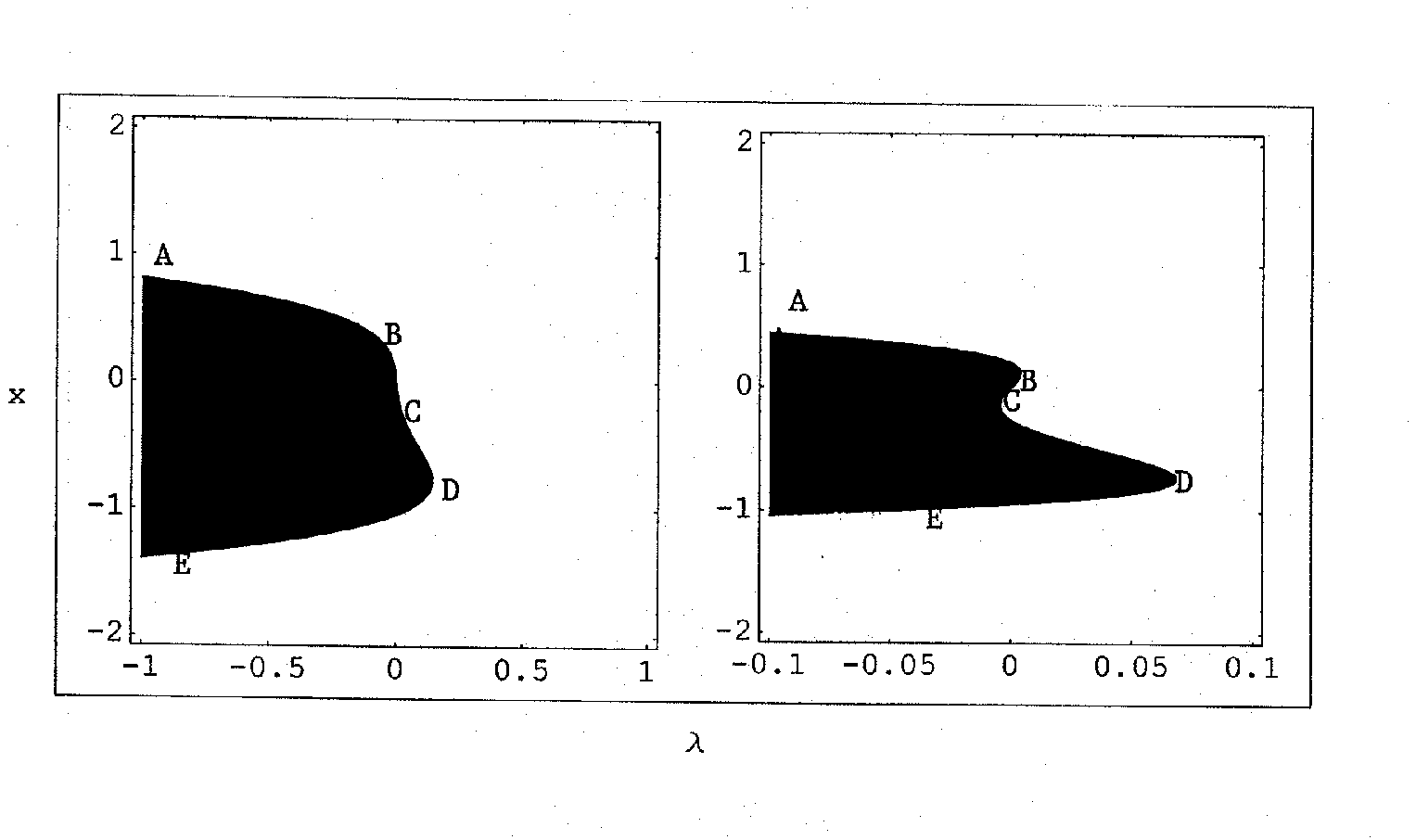}
\end{center}
\caption{Bifurcation diagrams in regions \texttt{1} and \texttt{2} of Figure \ref{Figure 12}b, respectively.}
\label{Figure 14}
\end{figure}

\begin{figure} 
\begin{center} 
\includegraphics[width=350pt]{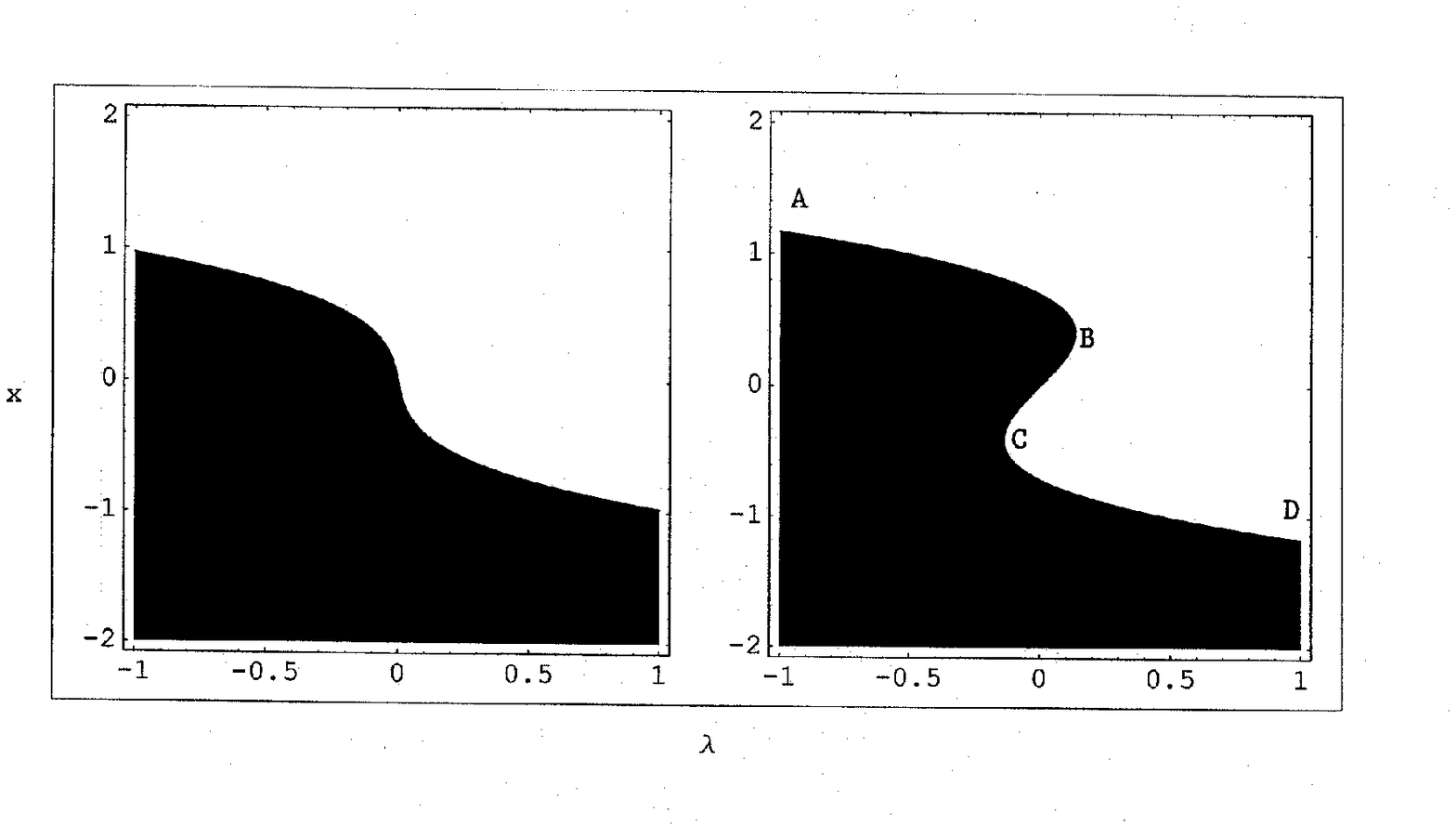}
\end{center}
\caption{Bifurcation diagrams in regions \texttt{1} and \texttt{2} of Figure \ref{Figure 12}c, respectively.} 
\label{Figure 15}
\end{figure}

\singlespace

\end{document}